\documentclass[10pt]{amsart}
\usepackage{amsmath,amssymb,graphicx}

\begin{document}

\newtheorem{thm}{Theorem}[section]
\newtheorem{lem}[thm]{Lemma}
\newtheorem{prop}[thm]{Proposition}
\newtheorem{cor}[thm]{Corollary}
\newtheorem{conj}[thm]{Conjecture}
\newtheorem{defn}[thm]{Definition}
\newtheorem*{remark}{Remark}

\numberwithin{equation}{section}

\newcommand{\Z}{{\mathbb Z}} 
\newcommand{\Q}{{\mathbb Q}}
\newcommand{\R}{{\mathbb R}}
\newcommand{\C}{{\mathbb C}}
\newcommand{\N}{{\mathbb N}}
\newcommand{\FF}{{\mathbb F}}
\newcommand{\T}{{\mathbb T}}
\newcommand{\fq}{\mathbb{F}_q}

\def\scrA{{\mathcal A}}
\def\scrB{{\mathcal B}}
\def\scrD{{\mathcal D}}
\def\scrE{{\mathcal E}}
\def\scrH{{\mathcal H}}
\def\scrK{{\mathcal K}}
\def\scrL{{\mathcal L}}
\def\scrM{{\mathcal M}}
\def\scrN{{\mathcal N}}
\def\scrS{{\mathcal S}}

\newcommand{\rmk}[1]{\footnote{{\bf Comment:} #1}}

\renewcommand{\mod}{\;\operatorname{mod}}
\newcommand{\ord}{\operatorname{ord}}
\newcommand{\TT}{\mathbb{T}}
\renewcommand{\i}{{\mathrm{i}}}
\renewcommand{\d}{{\mathrm{d}}}
\renewcommand{\^}{\widehat}
\newcommand{\HH}{\mathbb H}
\newcommand{\Vol}{\operatorname{vol}}
\newcommand{\area}{\operatorname{area}}
\newcommand{\tr}{\operatorname{tr}}
\newcommand{\norm}{\mathcal N} 
\newcommand{\intinf}{\int_{-\infty}^\infty}
\newcommand{\ave}[1]{\left\langle#1\right\rangle} 
\newcommand{\Var}{\operatorname{Var}}
\newcommand{\Cov}{\operatorname{Cov}}
\newcommand{\Prob}{\operatorname{Prob}}
\newcommand{\sym}{\operatorname{Sym}}
\newcommand{\disc}{\operatorname{disc}}
\newcommand{\CA}{{\mathcal C}_A}
\newcommand{\cond}{\operatorname{cond}} 
\newcommand{\lcm}{\operatorname{lcm}}
\newcommand{\Kl}{\operatorname{Kl}} 
\newcommand{\leg}[2]{\left( \frac{#1}{#2} \right)}  
\newcommand{\SL}{\operatorname{SL}}

\newcommand{\sumstar}{\sideset \and^{*} \to \sum}

\newcommand{\LL}{\mathcal L} 
\newcommand{\sumf}{\sum^\flat}
\newcommand{\Hgev}{\mathcal H_{2g+2,q}}
\newcommand{\USp}{\operatorname{USp}}
\newcommand{\conv}{*}
\newcommand{\dist} {\operatorname{dist}}
\newcommand{\CF}{c_0} 
\newcommand{\kerp}{\mathcal K}

\newcommand{\gp}{\operatorname{gp}}
\newcommand{\Area}{\operatorname{Area}}

\title{Quantum Chaos for point scatterers on flat tori}
\author{Henrik Uebersch\"ar}
\address{Institut de Physique Th\'eorique, CEA Saclay, 91191 Gif-sur-Yvette Cedex, France.}
\email{henrik.uberschar@cea.fr}
\date{\today}
\maketitle

\begin{abstract}
This survey article deals with a delta potential - also known as a point scatterer - on flat 2D and 3D tori. We introduce the main conjectures regarding the spectral and wave function statistics of this model in the so-called weak and strong coupling regimes. We report on recent progress as well as a number of open problems in this field.
\end{abstract}

\section{Introduction}

The point scatterer on a torus is a popular model to study the transition between integrable and chaotic dynamics in quantum systems. It rose to prominence in the quantum chaos literature in a famous paper of Petr \u{S}eba \cite{Seba} which dealt with the closely related case of rectangular billiards. The model first appeared in solid state physics \cite{KronigPenney} in the 1930s to explain electronic band structure and conductivity in solid crystals. Many applications arose in nuclear physics throughout the 1960s and 1970s, see for instance \cite{FaddeevBerezin}. The purpose of this article is to give an introduction to this important model which belongs to the class of pseudo-integrable systems and to report on some recent progress in this field. The reader will also be introduced to some important open problems.

\subsection{Kronig-Penney model}
In 1931 Kronig and Penney \cite{KronigPenney} studied the quantum mechanics of an electron in a periodic crystal lattice with the goal of understanding the conductivity properties of solid crystals.

They introduced the periodic 1D Hamiltonian
\begin{equation}
\hat{H}_{KP}=-\frac{d^2}{dx^2}+V_0\sum_{k\in\Z}\chi_{[-\tfrac{a}{2},\tfrac{a}{2}]}(x-k), \quad V_0>0, \quad 0<a\ll 1
\end{equation}
where $\chi$ denotes the characteristic function. According to Bloch theory, we have the decomposition $$L^2(\R)=\int_\oplus d\theta \scrH_\theta, \quad \theta\in[0,2\pi)$$ where $\scrH_\theta$ is the space of quasiperiodic functions with quasimomentum $e^{\i\theta}$:
$$\scrH_\theta=\{\psi\in C^\infty(\R) \mid \psi(x)=e^{\i k\theta }\psi(x+k), \quad k\in\Z\}$$
Let us consider the special case of periodic boundary conditions $\theta=0$.
To simplify the Hamiltonian ${\hat{H}_{KP}}|_{\scrH_0}$ it is convenient to take the limit $$a\searrow0, \quad \alpha=V_0 a=const.$$

Let $f\in C^\infty(\R/\Z)$. The calculation $$\int_{\R/\Z}V_0\chi_{[-\tfrac{a}{2},\tfrac{a}{2}]}(x)f(x)dx=\frac{\alpha}{a}\int_{-a/2}^{a/2}f(x)dx \to \alpha f(0), \quad a\searrow 0$$ shows that the Hamiltonian ${\hat{H}_{KP}}|_{\scrH_0}$ converges in the distributional sense to a singular rank-one perturbation of the 1D Laplacian 
\begin{equation}
-\frac{d^2}{dx^2}+V_0\chi_{[-\tfrac{a}{2},\tfrac{a}{2}]} \to H_\alpha=-\frac{d^2}{dx^2}+\alpha\left\langle \delta_0,\cdot \right\rangle\delta_0, \quad a\searrow 0.
\end{equation} 
The operator $H_\alpha$ can be realised rigorously by using von Neumann's self-adjoint extension theory. We will be interested in studying the analogues of the operator $H_\alpha$ on 2D and 3D tori.

\subsection{\u{S}eba billiard}
Let $R$ be a rectangle with side lengths $a$, $b$. We define the aspect ratio of $R$ as the quotient $(a/b)^2$. In a 1990 paper \cite{Seba} Petr \u{S}eba studied the operator 
\begin{equation}
H_\alpha=-\Delta+\alpha\left\langle\delta_{x_0},\cdot\right\rangle\delta_{x_0}, \quad x_0\in R, \; \alpha\in\R\setminus\{0\}
\end{equation}
on a rectangle with irrational aspect ratio and Dirichlet boundary conditions. \u{S}eba's motivation was to find a quantum system which displayed the features of quantised chaotic systems such as quantum ergodicity and level repulsion, yet whose classical dynamics was close to integrable. 

As was pointed out later by Shigehara \cite{Shigehara1} the energy levels obtained in Seba's quantisation do not repell each other, in fact careful numerical experiments conducted by Shigehara show that the spacing distribution coincides with that of the Laplacian which is conjectured to be Poissonian. We will discuss rigorous mathematical results in this direction in section 5. Shigehara suggested a different quantisation in his paper which should produce energy levels which display level repulsion. In the present paper we refer to Seba's quantisation as ``weak coupling'' and to Shigehara's as ``strong coupling''. A detailled discussion of these two different quantisations is given in section 3. 

In the present paper we will deal with a system closely related to the Seba billiard -- a point scatterer on a flat torus (which means periodic boundary conditions), however, the results which will be presented can probably be easily extended to rectangular domains with Dirichlet or Neumann boundary conditions.

{\bf Acknowledgements:}
I would like to thank Zeev Rudnick and Stephane Nonnenmacher for many helpful comments and suggestions that have led to the improvement of this paper.

\section{Quantisation of a point scatterer}

\subsection{Self-adjoint extension theory}
We consider a rectangle with side lengths $2\pi a$, $2\pi/a$, where $a>0$, and identify opposite sides to obtain the torus $\T^2=\R^2/2\pi\scrL_0$ where $\scrL_0=\Z(a,0)\oplus\Z(0,1/a)$.
We want to study the formal operator 
\begin{equation}
H_\alpha=-\Delta+\alpha\delta_{x_0}, \quad \alpha\in\R\setminus\{ 0\}, \quad x_0\in\T^2.
\end{equation} 

To treat $H_\alpha$ rigorously we will employ von Neumann's theory of self-adjoint extensions. For an introduction to this standard machinery see \cite{Z}. The main idea is to restrict $H_\alpha$ to a domain where we understand how it acts -- functions which vanish at the position of the scatterer and therefore do not ``feel" its presence. 

We denote by 
\begin{equation}
D_0=C^\infty_0(\T^2\setminus\{x_0\})
\end{equation}
the domain of $C^\infty$-functions which vanish in a neighbourhood of $x_0$. Clearly $H_\alpha|_{D_0}=-\Delta|_{D_0}$. We denote $-\Delta_0=-\Delta|_{D_0}$. The restricted Laplacian $-\Delta_0$ is a symmetric operator, however it is not self-adjoint. By restricting $-\Delta$ to the domain $D_0$ we are enlarging the domain of its adjoint.
Therefore we have $D_0=Dom(-\Delta_0)\subsetneq Dom(-\Delta_0^*)$. A simple computation of the adjoint $-\Delta_0^*$ shows that its domain is given by
\begin{equation}
Dom(-\Delta_0^*)=\{f\in L^2(\T^2) \mid \exists C\in\C: \; \Delta f + C\delta_{x_0}\in L^2(\T^2)\}.
\end{equation}
We have the following definition.
\begin{defn}
The deficiency spaces of a symmetric densely defined operator $A$ are given by the kernels
\begin{equation}
\scrK_{\pm}=\ker\{A^*\pm\i\}.
\end{equation}
The deficiency indices of $A$ are defined as $n_+=\dim\scrK_+$ and $n_-=\dim\scrK_-$. If $n_+=n_-=0$, then we say that $A$ is essentially self-adjoint.
\end{defn}
For $\lambda\notin\sigma(-\Delta)$ denote by $G_\lambda(x,y)$ the corresponding Green's function, namely the integral kernel of the resolvent
\begin{equation}
\frac{1}{\Delta_x+\lambda}f(x)=\int_{\T^2}G_\lambda(x,y)f(y)dy
\end{equation}
and therefore we have the following distributional identity
\begin{equation}
G_\lambda(x,y)=\frac{1}{\Delta_x+\lambda}\delta_y(x).
\end{equation}

Indeed, if we compute the deficiency elements of $-\Delta_0^*$ we have to solve
\begin{equation}
0=(-\Delta_0^*\pm\i)f=(-\Delta\pm\i)f+C\delta_{x_0}
\end{equation}
for some $C\in\C$. This shows that the deficiency spaces are spanned by the Green's functions $G_{\pm\i}(x,x_0)$. We thus have
\begin{equation}
Dom(-\Delta_0^*)=\overline{D_0}\oplus_{\perp}\scrL\{G_{\i}\}\oplus_{\perp}\scrL\{G_{-\i}\}
\end{equation}
where the orthogonal decomposition is with respect to the graph inner product $\left\langle f,h \right\rangle_g=\left\langle f,h \right\rangle+\left\langle \Delta_0^*f,\Delta_0^*h \right\rangle$ and the closure is taken with respect to the associated graph norm $\|f\|_g=\left\langle f,f \right\rangle_g$. 

The following theorem is due to von Neumann.
\begin{thm}
Let $A$ be a densely defined symmetric operator. If $A$ has deficiency indices $n_+=n_-=n\geq1$, then there exists a family of self-adjoint extensions which is parametrised by $U(n)$, the group of unitary maps on $\C^n$. The domain of the extension $A_U$ is given by
\begin{equation}
\begin{split}
Dom(A_U)=\{f\in Dom(A^*) \mid f=g+\left\langle G_\i, v \right\rangle+&\left\langle G_{-\i}, Uv \right\rangle, \\
 &g\in\overline{Dom(A)}, v\in\C^n\}
 \end{split}
\end{equation}
where $G_\i$, $G_{-\i}$ are the vectors whose entries are the deficiency elements and the closure of $Dom(A)$ is taken with respect to the graph norm of $A$. The operator $A_U=A^*|_{Dom(A_U)}$ is essentially self-adjoint.
\end{thm}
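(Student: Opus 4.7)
The plan is to proceed via the \emph{first von Neumann formula}, namely the graph-orthogonal decomposition of $Dom(A^*)$ into the closure of $Dom(A)$ and the deficiency spaces $\scrK_\pm$, and then to identify self-adjoint restrictions of $A^*$ by examining a boundary form on $\scrK_+\oplus\scrK_-$. First I would verify
\begin{equation*}
Dom(A^*)=\overline{Dom(A)}\oplus_{\perp}\scrK_+\oplus_{\perp}\scrK_-
\end{equation*}
relative to the graph inner product. The key computation is
\begin{equation*}
\left\langle g,f_\pm\right\rangle_g=\left\langle g,f_\pm\right\rangle+\left\langle Ag,A^*f_\pm\right\rangle=\left\langle(1\pm\i A)g,f_\pm\right\rangle=0
\end{equation*}
for $g\in Dom(A)$, $f_\pm\in\scrK_\pm$, using $\scrK_\pm=\ker(A^*\mp\i)=\mathrm{Ran}(A\pm\i)^\perp$; an analogous identity gives $\scrK_+\perp_g\scrK_-$. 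Exhaustion of $Dom(A^*)$ follows because $\mathrm{Ran}(A+\i)$ is closed with orthogonal complement $\scrK_+$, so any $h\in Dom(A^*)$ admits a decomposition $h=g+f_++f_-$.

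Next I would introduce the boundary form
\begin{equation*}
\omega(f,h)=\left\langle A^*f,h\right\rangle-\left\langle f,A^*h\right\rangle,
\end{equation*}
which vanishes on $Dom(A)$ by symmetry of $A$. On $\scrK_+\oplus\scrK_-$ the identities $A^*f_\pm=\mp\i f_\pm$ yield
\begin{equation*}
\omega(f_++f_-,h_++h_-)=-2\i\bigl(\left\langle f_+,h_+\right\rangle-\left\langle f_-,h_-\right\rangle\bigr).
\end{equation*}
Any closed extension $B$ with $A\subseteq B\subseteq A^*$ then corresponds to a subspace $V\subseteq\scrK_+\oplus\scrK_-$ via $Dom(B)=\overline{Dom(A)}\oplus V$; such $B$ is symmetric exactly when $\omega|_V\equiv 0$, and self-adjoint exactly when $V$ is \emph{maximally} $\omega$-isotropic, i.e.\ equal to its own $\omega$-annihilator inside $\scrK_+\oplus\scrK_-$.

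The polarization $\omega(f,f)=-2\i(\|f_+\|^2-\|f_-\|^2)$ shows that every isotropic $V$ is the graph of a linear isometry $U\colon V_+\subseteq\scrK_+\to\scrK_-$; maximality combined with the equality $n_+=n_-=n$ forces $V_+=\scrK_+$ and $U\in U(n)$, which gives both the bijection with $U(n)$ and the domain
\begin{equation*}
Dom(A_U)=\{g+v+Uv\mid g\in\overline{Dom(A)},\;v\in\scrK_+\}.
\end{equation*}
Once a basis of $\scrK_+$ (resp.\ $\scrK_-$) is fixed---e.g.\ the entries of the vectors $G_\i$ and $G_{-\i}$ of deficiency elements---this matches the formula in the statement with $v\in\C^n$ paired against those bases. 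The main obstacle is the promotion from symmetric to self-adjoint: one must show that the graph of a unitary $U$ really is the full $\omega$-annihilator of itself, so that $B^*\subseteq B$ and not merely $B\subseteq B^*$. This reduces to the observation that if $(a,b)\in\scrK_+\oplus\scrK_-$ satisfies $\left\langle a,v\right\rangle=\left\langle b,Uv\right\rangle$ for every $v\in\scrK_+$, then $a=U^*b$, hence $b=Ua$; here the unitarity $U^{-1}=U^*$---available precisely because $n_+=n_-$---is essential, and without it the construction yields at best a maximal symmetric extension that fails to be self-adjoint.
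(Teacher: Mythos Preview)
The paper does not prove this theorem; it is quoted as von Neumann's classical result, with the reader referred to \cite{Z} for the self-adjoint extension machinery, so there is no in-paper argument to compare against. Your sketch via the first von Neumann formula and the boundary (symplectic) form is the standard textbook proof and is correct in outline. The only blemish is a sign slip: you write $\scrK_\pm=\ker(A^*\mp\i)=\mathrm{Ran}(A\pm\i)^\perp$, which contradicts both the paper's Definition~2.1, where $\scrK_\pm=\ker(A^*\pm\i)$, and your own subsequent identity $A^*f_\pm=\mp\i f_\pm$; the latter is the convention consistent with the paper, and with it your boundary-form computation and the maximal-isotropic characterisation go through unchanged.
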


In our case we have for $\varphi\in(-\pi,\pi]$
\begin{equation}
\begin{split}
D_\varphi=Dom(-\Delta_\varphi)=&\{f\in Dom(-\Delta_0^*) \mid \\
 & \quad f=g+cG_\i+c e^{\i\varphi}G_{-\i},\;g\in\overline{D_0},c\in\C\}
 \end{split}
\end{equation}
and $-\Delta_\varphi$ is the restriction of $-\Delta_0^*$ to this domain. Functions $f\in Dom(-\Delta_\varphi)$ satisfy
$$\exists C\in\C: \quad \Delta f+C\delta_{x_0}\in L^2(\T^2)$$ and near $x_0$ they have the asymptotic $$f(x)=C\left(\frac{\log|x-x_0|}{2\pi}+\tan\frac{\varphi}{2}\right)+o(1).$$
The extension associated with the choice $\varphi=\pi$ is just the self-adjoint Laplacian on $C^\infty(\T^2)$. We will be interested in studying the extensions for $\varphi\in(-\pi,\pi)$.

\subsection{The quantisation condition}
An orthonormal basis of eigenfunctions of the Laplacian on $\T^2=\R^2/2\pi\scrL_0$ is given by the complex exponentials $$\psi_\xi(x)=\frac{1}{2\pi}e^{\i\left\langle x,\xi \right\rangle} \quad \text{where}\; \xi\in\scrL$$ and $\scrL$ is the dual lattice of $\scrL_0$:
\begin{equation}
\scrL=\{\xi\in\R \mid \forall \eta\in\scrL_0: \; \left\langle \xi,\eta \right\rangle\in\Z\}=\{(m/a,na) \mid m,n\in\Z\}
\end{equation}
The eigenvalue of $\psi_\xi$ is given by $|\xi|^2$. We introduce the set of distinct norms
$$\scrN=\{0=n_0<n_1<n_2<\cdots\}$$ and denote the multiplicity of $n\in\scrN$ by
\begin{equation}
r_\scrL(n)=\#\{\xi\in\scrL \mid n=|\xi|^2\}.
\end{equation}
If $\scrL$ is a rational lattice, i. e. if $a^4\in\Q$, then the multiplicities can be large and we have the bound (see e.g. \cite{OraveczRudnickWigman}, Lemma 7.2)
\begin{equation}
r_\scrL(n)\ll_\epsilon n^\epsilon.
\end{equation}

\begin{figure}
\includegraphics{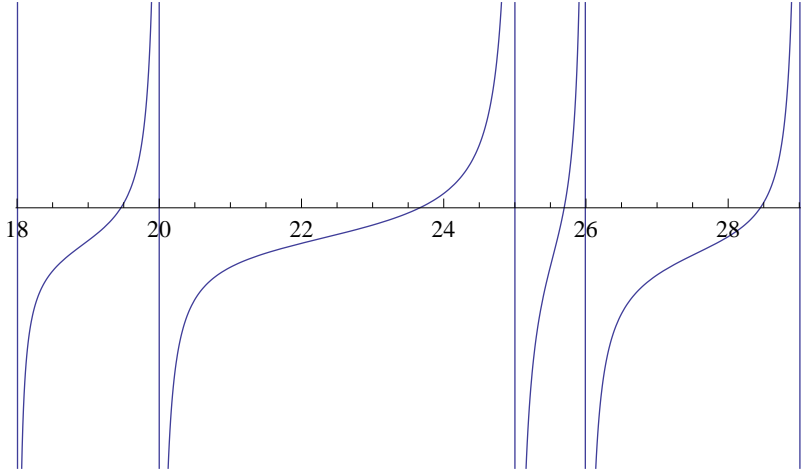}
\caption{The function on the RHS of eq. \eqref{quant cond} for the square lattice.}
\end{figure}

We have the following lemma which can be found in the standard literature on point scatterers, or in the appendix to the paper \cite{RU}.
\begin{lem}\label{quantisation lemma}
Let $\varphi\in(-\pi,\pi)$. We have that $\lambda\notin\sigma(-\Delta)$ is an eigenvalue of $-\Delta_\varphi$ on $\T^2$ iff
\begin{equation}\label{quant cond}
\sum_{j=0}^\infty r_\scrL(n_j)\left(\frac{1}{n_j-\lambda}-\frac{n_j}{n_j^2+1}\right)=c_0\tan(\varphi/2)
\end{equation}
where $$c_0=\sum_{j=0}^\infty \frac{r_\scrL(n_j)}{n_j^2+1}$$
and the corresponding eigenfunction is a multiple of the Green's function $G_\lambda(x,x_0)$ for which we have the $L^2$-identity
\begin{equation}\label{spec expansion}
G_\lambda(x,x_0)=\frac{1}{4\pi}\sum_{j=0}^\infty\frac{1}{n_j-\lambda}\sum_{|\xi|^2=n_j} e^{\i\left\langle \xi,x-x_0 \right\rangle}.
\end{equation}
\end{lem}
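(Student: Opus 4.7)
The plan is to identify every $\lambda$-eigenfunction of $-\Delta_\varphi$ with $\lambda\notin\sigma(-\Delta)$ as a scalar multiple of $G_\lambda(\cdot,x_0)$, and then to re-express the membership condition $G_\lambda(\cdot,x_0)\in D_\varphi$ as the trace identity \eqref{quant cond}. I would start by verifying $G_\lambda(\cdot,x_0)\in \mathrm{Dom}(-\Delta_0^*)$ and $-\Delta_0^*G_\lambda(\cdot,x_0)=\lambda G_\lambda(\cdot,x_0)$, which follows directly from the distributional identity $(\Delta+\lambda)G_\lambda(\cdot,x_0)=\delta_{x_0}$ and the characterisation of $\mathrm{Dom}(-\Delta_0^*)$ given in the excerpt: the $\delta_{x_0}$ piece is absorbed by the constant $C$ and the remainder $-\lambda G_\lambda(\cdot,x_0)$ lies in $L^2$. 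Because the deficiency indices are both equal to one, the $\lambda$-eigenspace of $-\Delta_0^*$ is one-dimensional outside $\sigma(-\Delta)$, so every eigenfunction of $-\Delta_\varphi$ at such $\lambda$ is proportional to $G_\lambda(\cdot,x_0)$, and the $L^2$-expansion \eqref{spec expansion} is obtained by expanding $(\Delta+\lambda)^{-1}$ in the orthonormal basis $\{\psi_\xi\}$ and applying it to $\delta_{x_0}$.

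Next I would reduce $G_\lambda(\cdot,x_0)\in D_\varphi$ to a scalar equation in $\lambda$ via the asymptotic description of $D_\varphi$ near $x_0$ recorded just above \eqref{quant cond}. The singular part of the 2D Green's function has coefficient $1/(2\pi)$ in front of $\log|x-x_0|$, so matching against $C(\log|x-x_0|/(2\pi)+\tan(\varphi/2))+o(1)$ forces $C=1$, and matching the constant term produces
\[
G_\lambda^{\mathrm{reg}}(x_0):=\lim_{x\to x_0}\Bigl(G_\lambda(x,x_0)-\frac{\log|x-x_0|}{2\pi}\Bigr)=\tan(\varphi/2),
\]
up to the overall normalisation of the extension parameter.

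To compute $G_\lambda^{\mathrm{reg}}(x_0)$ I would subtract the reference combination $\tfrac{1}{2}(G_i+G_{-i})$ from $G_\lambda$; both Green's functions share the same logarithmic singularity, so the difference is continuous at $x_0$, and the algebraic identity $\tfrac{1}{2}\bigl((n_j-i)^{-1}+(n_j+i)^{-1}\bigr)=n_j/(n_j^2+1)$ reproduces term-by-term in \eqref{spec expansion} exactly the counterterm appearing in \eqref{quant cond}. Absolute convergence of the regularised series at $x=x_0$ follows from the bound $(n_j-\lambda)^{-1}-n_j/(n_j^2+1)=O(n_j^{-2})$ combined with the divisor estimate $r_\scrL(n_j)\ll_\epsilon n_j^\epsilon$. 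The $\lambda$-independent constant $\tfrac{1}{2}(G_i+G_{-i})^{\mathrm{reg}}(x_0)$ together with the graph-norm normalisation of the deficiency element $G_i(\cdot,x_0)$ repackage into the factor $c_0=\sum_j r_\scrL(n_j)/(n_j^2+1)$ on the right-hand side.

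The principal technical obstacle is justifying the interchange of the pointwise limit $x\to x_0$ with the infinite spectral sum in the last step; this is precisely where one genuinely uses the divisor bound on $r_\scrL$ to control the tail. Care is also needed in tracking the lattice-dependent constant $c_0$, which encodes the squared graph-norm of the deficiency vector $G_i(\cdot,x_0)$ and reflects the specific $U(1)$-parameterisation of the von Neumann self-adjoint extensions chosen in the excerpt.
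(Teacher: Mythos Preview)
The paper does not actually prove this lemma; it simply states that the result ``can be found in the standard literature on point scatterers, or in the appendix to the paper \cite{RU}''. There is therefore no in-paper proof to compare against.

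Your sketch is correct and follows the standard argument one finds in that literature: identify the $\lambda$-eigenspace of $-\Delta_0^*$ with $\C\cdot G_\lambda(\cdot,x_0)$, then translate the domain condition $G_\lambda\in D_\varphi$ into a scalar equation either via the short-range asymptotic of $G_\lambda$ near $x_0$ or, equivalently, by projecting $G_\lambda$ onto the deficiency vectors $G_{\pm\i}$ in the graph inner product. The subtraction $\tfrac12(G_\i+G_{-\i})$ and the identity $\tfrac12((n_j-\i)^{-1}+(n_j+\i)^{-1})=n_j/(n_j^2+1)$ are exactly the mechanism that produces the counterterm in \eqref{quant cond}, and your identification of $c_0$ with $\|G_\i\|_g^2$ (up to the $4\pi^2$ normalisation of the area) is correct.

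One small caveat: the divisor bound $r_\scrL(n_j)\ll_\epsilon n_j^\epsilon$ you invoke for tail convergence is stated in the paper only for \emph{rational} lattices, whereas the lemma is asserted for a general torus. For irrational lattices the multiplicities need not obey this bound uniformly in the stated form, but the convergence of $\sum_j r_\scrL(n_j)\bigl((n_j-\lambda)^{-1}-n_j/(n_j^2+1)\bigr)$ follows for any lattice from the circle-problem asymptotic $\sum_{n_j\le x}r_\scrL(n_j)=\pi x+O(x^\theta)$ together with partial summation, which you should cite instead.
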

The eigenfunctions in the lemma are not the only eigenfunctions of $-\Delta_\varphi$. If $r_\scrL(n_j)>1$ (which in our case happens for all $n_j>0$), then $n_j$ appears in the spectrum of $-\Delta_\varphi$ with multiplicity $r_\scrL(n_j)-1$. The associated eigenfunctions are superpositions of Laplacian eigenfunctions and vanish at $x_0$ and therefore do not feel the effect of the scatterer. 
In the present article we will only be interested in the new eigenvalues, which are solutions to \eqref{quant cond}, and the associated new eigenfunctions.

\section{Weak coupling vs. strong coupling}
Two different quantisations appear in the literature on point scatterers, and we will follow the terminology used by Shigehara et al. in the papers \cite{Shigehara1,Shigehara2,ShigeharaCheon3D,Shigehara3} in referring to the two models as {\em weak} and {\em strong} coupling. The purpose of this section is to explain how these different quantisations arose in the literature and give a derivation of the strong coupling quantisation on a 2D torus.

In his famous paper on wave chaos in a singular billiard \cite{Seba} \u{S}eba considered a point scatterer on a rectangle with irrational aspect ratio and Dirichlet boundary conditions. \u{S}eba computed the spectrum of $-\Delta_\varphi$ by solving the equation \eqref{quant cond} numerically and a plot of the level spacing distribution seemed to suggest level repulsion. 

In 1994 Shigehara came to investigate this question and following careful numerical investigations he observed \cite{Shigehara1} that the eigenvalue spacing distribution of the self-adjoint extension $-\Delta_\varphi$ seemed to coincide with that of the Laplacian, which according to the conjecture of Berry and Tabor \cite{BerryTabor} is believed to be Poissonian. Shigehara observed that the apparent ``weakness" of the point scatterer in \u{S}eba's quantisation could be corrected by adjusting the parameter $\varphi$ in a suitable way as the eigenvalue $\lambda$ tends to infinity. 


The operator can be realised by employing the self-adjoint extension theory discussed in the previous section. The formal operators 
\begin{equation}
-\Delta+\alpha\delta_{x_0}, \quad \alpha\in\R\setminus\{0\}
\end{equation} 
are associated with the family of extensions 
\begin{equation}
-\Delta_\varphi, \quad \varphi\in(-\pi,\pi).
\end{equation}
It is well known that in 1D there is an exact relation which links the physical coupling constant $\alpha\neq 0$ and the parameter $\varphi\in(-\pi,\pi)$. The situation is more complicated in 2 or 3D. A relation can be derived (at a physical level of rigour) from the scattering problem for a spherical scatterer (cf. for instance \cite{SebaExner}) by shrinking its diameter to zero. 

One obtains a relation which in contrast to the 1D case contains a logarithmic divergence in the spectral parameter $\lambda$
\begin{equation}
\frac{1}{\alpha}=C_1\tan\frac{\varphi}{2}+C_2\log\lambda
\end{equation}
for certain real constants $C_1$, $C_2$ which depend on the domain. 

So a fixed choice of $\varphi$ corresponds to weak coupling, 
\begin{equation}
\alpha\sim\frac{1}{C_2\log\lambda}.
\end{equation}
On the other hand a fixed physical coupling constant $\alpha=const.\neq 0$ would require a renormalisation of the parameter $\varphi$, which means $\varphi=\varphi_\lambda$ should be allowed to depend on $\lambda$ as $\lambda\to\infty$, 
\begin{equation}\label{renorm}
C_1\tan\frac{\varphi_\lambda}{2}\sim-C_2\log\lambda.
\end{equation}


The renormalisation condition \eqref{renorm} is equivalent to a different quantisation condition for a point scatterer (cf. for instance \cite{BogomolnyGerlandSchmit}) which only takes into account the physically relevant energies in the summation,
\begin{equation}\label{renorm quant cond}
\sum_{|n_j-\lambda|<\lambda^\delta}r_\scrL(n_j)\left\{\frac{1}{n_j-\lambda}-\frac{n_j}{n_j^2+1}\right\}=\frac{1}{\alpha}
\end{equation}
where $\alpha\neq0$ is the physical coupling constant.
We require the following lemma which gives an asymptotic for the sum over the energies outside the interval $[\lambda-\lambda^\delta,\lambda+\lambda^\delta]$.
\begin{lem}\label{truncation}
Consider a general torus $\T^2$. Let $\delta\in(\theta,1)$, where $\theta=\frac{131}{416}$. We have the asymptotic
\begin{equation}
\sum_{|n_j-\lambda|\geq\lambda^\delta}r_\scrL(n_j)\left\{\frac{1}{n_j-\lambda}-\frac{n_j}{n_j^2+1}\right\}=-\pi\log\lambda+O(1)
\end{equation}
\end{lem}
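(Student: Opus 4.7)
The plan is to convert the sum into a Stieltjes integral against the counting function of lattice norms and then separate the main contribution from the lattice-point error. Write
\[
S(\lambda)=\sum_{|n_j-\lambda|\geq\lambda^\delta}r_\scrL(n_j)f(n_j),\qquad f(t):=\frac{1}{t-\lambda}-\frac{t}{t^2+1},
\]
and let $N(t)=\sum_{n_j\leq t}r_\scrL(n_j)=\#\{\xi\in\scrL:|\xi|^2\leq t\}$. The dual lattice $\scrL$ has covolume $1$, so Huxley's bound for the lattice-point counting problem (which holds for general lattices) gives
\[
N(t)=\pi t+E(t),\qquad |E(t)|\ll t^{\theta},\quad \theta=\tfrac{131}{416}.
\]
Thus $S(\lambda)=\pi\int_{A_\lambda} f(t)\,dt+\int_{A_\lambda}f(t)\,dE(t)$, where $A_\lambda=[0,\lambda-\lambda^\delta]\cup[\lambda+\lambda^\delta,\infty)$.

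First I would compute the main term explicitly. Since $f$ has the antiderivative $F(t)=\log|t-\lambda|-\tfrac12\log(t^2+1)$, evaluating at the endpoints gives
\[
\int_{A_\lambda}f(t)\,dt=[F(\lambda-\lambda^\delta)-F(0)]+[F(\infty)-F(\lambda+\lambda^\delta)].
\]
Using $F(\infty)=0$, $F(0)=-\log\lambda$, and $F(\lambda\pm\lambda^\delta)=\delta\log\lambda-\log\lambda+O(\lambda^{\delta-1})$, the two brackets combine to $-\log\lambda+O(1)$, producing the main term $-\pi\log\lambda+O(1)$.

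Next I would show the error term $\int_{A_\lambda}f(t)\,dE(t)$ is $O(1)$, and this is where the hypothesis $\delta>\theta$ enters. Integrate by parts: the boundary contributions at $0$ and $\infty$ vanish, while at $t=\lambda\pm\lambda^\delta$ one has $|f(t)|\ll\lambda^{-\delta}$ and $|E(t)|\ll\lambda^\theta$, giving $O(\lambda^{\theta-\delta})=o(1)$. It remains to control $\int_{A_\lambda}|E(t)||f'(t)|\,dt$ using
\[
|f'(t)|\ll\frac{1}{(t-\lambda)^2}+\frac{1}{t^2+1}.
\]
The $1/(t^2+1)$ piece contributes $\int_1^\infty t^{\theta-2}\,dt=O(1)$. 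For the $1/(t-\lambda)^2$ piece, the substitution $u=|t-\lambda|$ on each component of $A_\lambda\cap[\lambda/2,2\lambda]$ yields $\int_{\lambda^\delta}^{\lambda/2}\lambda^\theta u^{-2}du=O(\lambda^{\theta-\delta})=o(1)$, while the tail $t\geq 2\lambda$ contributes $O(\lambda^{\theta-1})$.

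The main obstacle is the careful tracking near the excluded window $|t-\lambda|<\lambda^\delta$: both the boundary term from integration by parts and the tail of the weighted integral are $O(\lambda^{\theta-\delta})$, so the threshold $\theta=131/416$ is precisely the Huxley exponent and cannot be improved without sharper lattice-point estimates. Summing the main and error contributions yields $S(\lambda)=-\pi\log\lambda+O(1)$, as claimed.
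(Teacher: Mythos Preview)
Your proof is correct and follows essentially the same route as the paper: both convert the sum to an integral via the Gauss circle law (with Huxley's exponent) and bound the remainder by summation/integration by parts, using precisely $\delta>\theta$ to control the contribution near the excluded window. One minor slip: $F(0)=\log\lambda$, not $-\log\lambda$; with the correct sign your bracket computation still yields $-\log\lambda+O(1)$, so the argument stands.
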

\begin{proof}
We have
\begin{equation}
\begin{split}
&\sum_{|n_j-\lambda|\geq\lambda^\delta}r_\scrL(n_j)\left\{\frac{1}{n_j-\lambda}-\frac{n_j}{n_j^2+1}\right\}\\
&=\sum_{|n_j-\lambda|\geq\lambda^\delta}r_\scrL(n_j)\left\{\frac{1}{n_j-\lambda}-\frac{1}{n_j}\right\}+O(1).
\end{split}
\end{equation}
We will use the circle law
\begin{equation}
\sum_{n_j\leq x}r_\scrL(n_j)=\pi x+O_\epsilon(x^{\theta+\epsilon})
\end{equation}
where the best known exponent $\theta=\tfrac{131}{416}$ is due to Huxley \cite{Huxley} and the optimal exponent is expected to be $\theta=\tfrac{1}{4}$ (Gauss's circle problem).

Summation by parts allows us to compare a lattice sum with an integral
\begin{equation}
\sum_{\alpha\leq|\xi|^2\leq\beta}f(|\xi|^2)=\pi\int_\alpha^\beta f(t)dt+O(\alpha^\theta f(\alpha)+\beta^\theta f(\beta))
+O\left(\int_\alpha^\beta |f'(t)|t^\theta dt\right)
\end{equation}
where $f$ is some differentiable function.

We obtain for the first sum
\begin{equation}
\begin{split}
\sum_{\lambda+\lambda^\delta\leq n_j}r_\scrL(n_j)\left\{\frac{1}{n_j-\lambda}-\frac{1}{n_j}\right\}
&=\pi\int_{\lambda+\lambda^\delta}^\infty\left\{\frac{1}{x-\lambda}-\frac{1}{x}\right\}dx+O(\lambda^{\theta-\delta})\\
&=(1-\delta)\pi\log\lambda+O(\lambda^{\theta-\delta}).
\end{split}
\end{equation}
Similarly for the second sum
\begin{equation}
\begin{split}
\sum_{n_j\leq\lambda-\lambda^\delta}r_\scrL(n_j)\left\{\frac{1}{n_j-\lambda}-\frac{1}{n_j}\right\}
&=\pi\int^{\lambda-\lambda^\delta}_1\left\{\frac{1}{x-\lambda}-\frac{1}{x}\right\}dx+O(1)\\
&=(\delta-2)\pi\log\lambda +O(1).
\end{split}
\end{equation}
\end{proof}

With the help of the Lemma above we can now show that the quantisation \eqref{renorm quant cond} is in fact equivalent to the strong coupling quantisation given by the renormalisation condition \eqref{renorm}.

We obtain
\begin{equation}
\sum_{j=0}^\infty r_\scrL(n_j)\left\{\frac{1}{n_j-\lambda}-\frac{n_j}{n_j^2+1}\right\}=-\pi\log\lambda+\frac{1}{\alpha}+O(1)
\end{equation}
which forces the renormalisation
\begin{equation}
c_0\tan\frac{\varphi_\lambda}{2}\sim-\pi\log\lambda.
\end{equation}

\section{Statistics of wave functions}

Let $\{\lambda_j^\varphi\}$ denote the solutions to the spectral equation \eqref{quant cond} (where $\varphi$ is fixed). The eigenvalues $\{\lambda_j^\varphi\}$ interlace with the norms $\{n_j\}$ in the following way
\begin{equation}
\lambda_0^\varphi<n_0<\lambda_1^\varphi<n_1<\cdots<\lambda_j^\varphi<n_j<\cdots 
\end{equation}
We denote the solutions to the strong coupling quantisation condition \eqref{renorm quant cond} by $\{\lambda_j^{str.}\}$. Similarly as above the eigenvalues $\{\lambda_j^{str.}\}$ interlace with the norms $\{n_j\}$. The corresponding eigenfunctions are the Green's functions $\{G_{\lambda_j^{str.}}\}$.
We are interested in the statistical behaviour of the eigenfunctions $\{G_{\lambda_j^\varphi}\}$ and $\{G_{\lambda_j^{str.}}\}$ in the limit as the eigenvalue tends to infinity.

Let $a\in C^\infty(S^*\T^2)$ where $S^*\T^2 \simeq \T^2 \times S^1$. Let $(x,\phi)\in\T^2\times S^1$ and expand $a\in C^\infty(\T^2\times S^1)$ into a Fourier series
\begin{equation}
a(x,\phi)=\sum_{\zeta\in\scrL, k\in\Z}\hat{a}(\zeta,k)e^{\i\left\langle \zeta,x \right\rangle+\i k\phi}
\end{equation}

In order to quantise the classical symbol $a$ it is convenient to realise the torus as the quotient $\T^2=\C/2\pi\scrL_0[\i]$, where $$\scrL_0[i]=\{x_1+\i x_2 \mid (x_1,x_2)\in\scrL_0\}.$$ For $\zeta\in\scrL_0$ we define by $\hat{\zeta}$ the corresponding element of $\scrL_0[\i].$
We associate with the symbol $a$ the zeroth order pseudo-differential operator $Op(a):L^2(\T^2)\to L^2(\T^2)$ defined on the Fourier transform side by
\begin{equation}
\widehat{(Op(a)f)}(\xi)=\sum_{\zeta\in\scrL, k\in\Z}\hat{a}(\zeta,k)\left(\frac{\hat{\xi}}{|\xi|}\right)^k\hat{f}(\xi-\zeta), \quad \xi\neq0
\end{equation}
and 
\begin{equation}
\widehat{(Op(a)f)}(0)=\sum_{\zeta\in\scrL}\hat{a}(\zeta,0)\hat{f}(\xi-\zeta)
\end{equation}
where
\begin{equation}
f(x)=\sum_{\xi\in\scrL}\hat{f}(\xi)e^{\i\left\langle \xi,x \right\rangle}.
\end{equation}

Let $\lambda\in\R\setminus\sigma(-\Delta)$. Denote the $L^2$-normalised Green's function by $g_\lambda=G_\lambda/\|G_\lambda\|_2$. We would like to study the behaviour of the matrix elements $$\left\langle Op(a) g_{\lambda_j^\varphi}, g_{\lambda_j^\varphi}\right\rangle$$ in the limit as $j\to\infty$. Similarly we would like to study the matrix elements of $Op(a)$ in the strong coupling quantisation, $$\left\langle Op(a) g_{\lambda_j^{str.}}, g_{\lambda_j^{str.}}\right\rangle.$$

We first study the behaviour of the eigenfunction in position space. If we take a classical symbol on position space $a\in C^\infty(\T^2)$, then the operator $Op(a)$ is simply given by multiplication
\begin{equation}
\left\langle Op(a)g_\lambda,g_\lambda \right\rangle=\int_{\T^2} a|g_\lambda|^2 d\mu.
\end{equation}

We have the following theorem, which is proven in \cite{RU}. In the paper the theorem is stated for the specific case of the eigenfunctions $g_{\lambda_j^\varphi}$ of the weakly coupled point scatterer. However, the result holds for any increasing sequence of numbers which interlaces with the norms $\{n_j\}$. In particular it also applies to the eigenfunctions $g_{\lambda_j^{str.}}$ of a strongly coupled point scatterer. We state the theorem in full generality.
\begin{thm} \label{wave function stats}{\bf (Rudnick-U., 2012)}\\
Let $\T^2=\R^2/2\pi\scrL_0$ be a general flat torus. Let $a\in C^\infty(\T^2)$. Recall that $g_\lambda=G_\lambda/\|G_\lambda\|_2$ denotes the $L^2$-normalised Green's function.
For any increasing sequence of numbers $\Lambda=\{\lambda_j\}$ which interlaces with the norms $\scrN=\{n_j\}$ there exists a density one subsequence $\Lambda_\infty\subset\Lambda$ such that
\begin{equation}
\int_{\T^2} a|g_{\lambda_j}|^2 d\mu \to \frac{1}{\area(\T^2)}\int_{\T^2} a\;d\mu
\end{equation}
as $j\to\infty$ along $\Lambda_\infty$.
\end{thm}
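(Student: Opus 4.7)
First I would Fourier-expand $a(x) = \sum_{\zeta \in \scrL} \hat{a}(\zeta)\,e^{\i\langle \zeta, x\rangle}$. The $\zeta = 0$ mode supplies the Liouville average $\hat{a}(0) = \frac{1}{\area(\T^2)}\int_{\T^2} a\,d\mu$ automatically from $\|g_{\lambda_j}\|_2 = 1$, so it suffices to prove that for each fixed nonzero $\zeta \in \scrL$ there is a density-one subset $J_\zeta \subset \N$ along which $\widehat{|g_{\lambda_j}|^2}(\zeta) \to 0$. The rapid decay of $\hat{a}$ (since $a \in C^\infty$), combined with a diagonal argument over the countable set $\scrL$, then produces a single density-one subsequence $\Lambda_\infty$ on which convergence is uniform in $\zeta$.

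Using the $L^2$-expansion \eqref{spec expansion} directly gives
\begin{equation*}
\widehat{|g_\lambda|^2}(\zeta) \;=\; c\, e^{-\i\langle \zeta,x_0\rangle}\, \frac{S_\zeta(\lambda)}{S_0(\lambda)},\qquad S_\zeta(\lambda) := \sum_{\xi \in \scrL}\frac{1}{(|\xi|^2-\lambda)(|\xi-\zeta|^2-\lambda)},
\end{equation*}
with $S_0(\lambda)$ proportional to $\|G_\lambda\|_2^2$. Partial fractions, followed by the substitution $\xi \mapsto \xi-\zeta$ in one of the resulting sums, collapse this to
\begin{equation*}
S_\zeta(\lambda) \;=\; 2|\zeta|^2 \sum_{\xi \in \scrL}\frac{1}{(|\xi|^2-\lambda)\bigl(|\zeta|^4 - 4\langle\xi,\zeta\rangle^2\bigr)},
\end{equation*}
trading the two near-spectrum singularities for a single one multiplied by a weight of generic size $|\xi|^{-2}$, singular only on the two lines $\langle\xi,\zeta\rangle = \pm|\zeta|^2/2$.

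Next I would estimate $S_\zeta$ and $S_0$ separately. For $S_\zeta$, split at $||\xi|^2 - \lambda_j| = \lambda_j^\delta$ with $\delta$ as in Lemma \ref{truncation}. The tail contribution is $o(1)$: the extra $|\xi|^{-2}$ decay absorbs the logarithm that appeared there, so the analogous summation by parts against the circle law produces only a constant. The near part equals $\sum_{|n-\lambda_j|<\lambda_j^\delta}(n-\lambda_j)^{-1} \sum_{|\xi|^2=n} (|\zeta|^4 - 4\langle\xi,\zeta\rangle^2)^{-1}$, whose inner sum is $O(r_\scrL(n)/n)$ for a density-one set of $n$ (those whose representatives stay away from the singular lines). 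For $S_0$, the interlacing bound $S_0(\lambda_j) \geq (n_j - \lambda_j)^{-2} + (\lambda_j - n_{j-1})^{-2}$ already dominates the near part of the numerator, giving $|S_\zeta|/S_0 \lesssim r_\scrL(n_j)/n_j \to 0$ once the exceptional $j$ are removed.

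\textbf{The main obstacle} is the exceptional set of $j$ for which the norm $n_j$ or $n_{j-1}$ closest to $\lambda_j$ has lattice representatives $\xi$ whose inner product $\langle\xi,\zeta\rangle$ is anomalously close to $\pm|\zeta|^2/2$. There the partial-fractions weight $|\zeta|^4 - 4\langle\xi,\zeta\rangle^2$ is small, the near-part estimate for $S_\zeta$ degrades, and the ratio cannot be controlled pointwise. Showing that this set has density zero combines the multiplicity bound $r_\scrL(n) \ll_\varepsilon n^\varepsilon$ with an equidistribution input for lattice points on circles --- the same type of arithmetic input that already underlies Lemma \ref{truncation}.
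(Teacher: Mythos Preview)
Your opening reduction --- Fourier-expand $a$, isolate $\zeta=0$, and reduce to showing $S_\zeta(\lambda_j)/S_0(\lambda_j)\to 0$ along a density-one set --- matches the paper. The divergence is in how you treat $S_\zeta$. The paper does \emph{not} use partial fractions. It first truncates $G_{\lambda_j}$ to the partial sum $G^L_{\lambda_j}$ over $\{\xi:\,||\xi|^2-\lambda_j|<L\}$ with $L=\lambda_j^\delta$, and shows the truncation error is small in $L^2$ for a density-one set of $j$. Then the off-diagonal Fourier coefficient of $|G^L_{\lambda_j}|^2$ at $\zeta$ is a sum over those $\xi$ for which \emph{both} $\xi$ and $\xi-\zeta$ lie in the thin annulus. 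The arithmetic input is that lattice points in such an annulus are well separated (by $\gg\lambda_j^{1/2-\delta}$), so for fixed $\zeta$ and large $\lambda_j$ this sum is \emph{empty}. That is the ``well-spacedness of lattice points in thin annuli'' the paper refers to, and it is an elementary geometric fact, not an equidistribution statement.

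Your partial-fractions route has two genuine gaps. First, the collapsed identity for $S_\zeta$ fails on the line $2\langle\xi,\zeta\rangle=|\zeta|^2$, where $|\xi|^2=|\xi-\zeta|^2$ and the original summand is $(|\xi|^2-\lambda)^{-2}$; if the nearest norm $n_j$ happens to be represented on that line, this contributes a term of the same order as the dominant term in $S_0$, and the ratio $S_\zeta/S_0$ does not tend to zero. One can exclude such $j$, but for irrational lattices the near-singular terms (where $||\xi|^2-|\xi-\zeta|^2|$ is nonzero yet arbitrarily small) enjoy no uniform lower bound on the denominator, and the same obstruction persists. Second, your tail estimate cannot be run by summation by parts against the circle law as in Lemma~\ref{truncation}: the weight $(|\zeta|^4-4\langle\xi,\zeta\rangle^2)^{-1}$ depends on the \emph{direction} of $\xi$, not on $|\xi|^2$, so the circle law gives no handle on it. Controlling the joint distribution of $|\xi|^2$ and $\langle\xi,\zeta\rangle$ is a different and harder arithmetic problem than the spacing bound the paper actually uses; the ``equidistribution of lattice points on circles'' you invoke is both deeper than what is needed and, for irrational $\scrL_0$ (where circles generically carry only four points), not available in any useful form.
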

\begin{remark}
In particular this theorem implies that {\bf both} in the weak {\bf and} strong coupling regimes the eigenfunctions equidistribute in position space. Also note that the result holds for both rational and irrational lattices. The analogous result was proved in \cite{Yesha} for the standard 3D torus and general 3D tori which satisfy certain irrationality conditions.
\end{remark}
The proof of Theorem \ref{wave function stats} uses the explicit formula \eqref{spec expansion} for the Green's function and by approximating the Green's function by a sum over a polynomial size interval the problem can be translated into a number theoretical problem about the well-spacedness of lattice points in thin annuli.

In a recent paper \cite{MarklofR} Marklof and Rudnick have shown that for rational polygons a full density of eigenfunctions of the Laplacian equidistributes in position space. In particular their result applies to the square torus. Their proof uses Egorov's theorem.

Quantum ergodicity does of course not hold for the Laplacian eigenfunctions on the torus. If one chooses an eigenbasis of plane waves, the eigenfunctions are obviously localised in momentum. It is an interesting question to ask if the presence of a (weakly or strongly coupled) point scatterer can change this. Curiously, the answer depends on the arithmetic properties of the lattice $\scrL_0$. 

For both a weakly and strongly coupled point scatterer it is possible \cite{KU} to prove quantum ergodicity for $\scrL_0=\Z^2$ and it is likely that one can generalise this to any rational lattice. On the other hand one can disprove quantum ergodicity in the irrational case. The failure of quantum ergodicity in the closely related case of the Seba billiard (which has an irrational aspect ratio) was already conjectured by Berkolaiko, Keating and Winn \cite{BerkolaikoKeatingWinn}. 

In \cite{KeatingMarklofWinn} Keating, Marklof and Winn prove under assumptions on the spectrum (which are consistent with the Berry-Tabor conjecture) that there exists a positive density subsequence of eigenfunctions for the Seba billiard (weak coupling) which become localised around two Laplacian eigenfunctions. In particular it follows that this subsequence becomes localised in the high energy limit, therefore disproving Quantum Ergodicity. This phenomenon is in some sense similar to scarring on unstable periodic orbits of the classical system. The authors construct a sequence of quasimodes which converge to the real eigenfunctions in order to obtain their results. It would be interesting to prove this result without any assumptions on the spectrum. In the closely related case of a weakly coupled point scatterer on an irrational 2D torus the failure of Quantum Ergodicity can be proved unconditionally \cite{KU}. Another interesting question is to see if one can classify quantum limits which are localised in momentum without any assumptions on the spectrum.

{\bf Open problems:} {\em  It would be interesting to try to prove the analogous statement for the strong coupling limit: Can one disprove quantum ergodicity for irrational lattices?}

The microcolal lift $d\mu_j$ of the measures $|g_{\lambda_j^\varphi}|^2 d\mu$ is defined by the identity 
\begin{equation}
\left\langle Op(a)g_{\lambda_j^\varphi},g_{\lambda_j^\varphi}\right\rangle=\int_{S^*\T^2}a\, d\mu_j, \quad a\in C^\infty(S^*\T^2).
\end{equation}
The quantum limits are the limit points of the sequence $d\mu_j$ in the weak-* topology. In \cite{Jakobson} Jakobson classified the quantum limits for the Laplacian on the square torus. One can pose the same problem for a point scatterer.

{\bf Open problems:} {\em
What are the quantum limits for a point scatterer on the square torus in the weak and strong coupling regimes? How about general rational and irrational 2D tori? How about 3D tori?}

\section{Spectral statistics}

One of the main observations in \u{S}eba's paper \cite{Seba} was level repulsion for a point scatterer in a rectangular billiard with Dirichlet boundary conditions and irrational aspect ratio. \u{S}eba computed the eigenvalues numerically and plotted the spacing distribution which did indeed reveal some form of level repulsion and he initially conjectured that the spacing distribution should coincide with that of the Gaussian Orthogonal Ensemble (GOE) in Random Matrix Theory. 

As Shigehara \cite{Shigehara1} discovered later, level repulsion, as observed by \u{S}eba, could only be expected in the strong coupling regime, where the extension parameter is normalised in a suitable way as the eigenvalue tends to infinity. It is likely that Seba carried out a truncation of the spectral equation (similar to the one in Lemma \ref{truncation}) when performing the numerics, so effectively he calculated the eigenvalues for the strong coupling quantisation.

Based on careful numerics, Shigehara predicted that one should recover Poissonian level statistics in the weak coupling regime, as was expected for the unperturbed Laplacian on an irrational rectangle. In other words, the effect of the scatterer in this regime was too weak to have an impact in the high energy limit. In 3D, however, the situation is very different. Numerics suggest that the eigenvalues of a fixed self-adjoint extension obey intermediate level statistics, just as in the strong coupling regime in 2D \cite{BogomolnyGerlandSchmit}.

In a 1999 paper \cite{BogomolnyGerlandSchmit} Bogomolny, Gerland and Schmit argued that the spacing distribution was close to a semi-Poissonian distribution. The \u{S}eba billiard thus belongs to an intermediate class of systems which shows a weaker form of level repulsion than chaotic systems. Another example of such intermediate systems are flat surfaces with conic singularities. Intermediate statistics have also been observed for certain families of quantum maps \cite{GiraudMarklofOKeefe} and near the transition point in the Anderson model in 3D (cf. \cite{Guhr}, section 6. 1. 3., p. 332-3).

There is also a recent paper by St\"ockmann, Kuhl and Tudorovskiy \cite{Stoeckmann} which claims to revise the results of Bogomolny, Gerland and Schmit in \cite{BogomolnyGerlandSchmit}. The paper contains heuristic mathematical arguments and numerics which show a transition from level repulsion to Poissonian spacing statistics for the Seba biliard in the high energy limit. However, the quantisation condition used in \cite{Stoeckmann} can be shown to correspond to the weak coupling regime -- as opposed to the strong coupling regime considered by Bogomolny, Gerland and Schmit. The Poissonian statistics observed in \cite{Stoeckmann} are therefore hardly surprising, as they were already predicted by Shigehara in \cite{Shigehara1}. A rigorous mathematical proof that the spacing distribution of a weakly coupled point scatterer coincides with that of the Laplacian on the 2D torus is given in the paper \cite{RU2} and we will discuss this result and others below.

There are many open questions regarding the spectral statistics of point scatterers on flat tori. Even in the case of the unperturbed Laplacian, little can be said rigorously. Let $$\delta_j=n_j-n_{j-1}>0$$ and define $$N_{\scrL_0}(x)=\#\{ n_j\leq x\}.$$ If $\scrL_0=\Z^2$, we have the following asymptotic which is due to Landau \cite{Landau}
\begin{equation}
N_{\Z^2}(x)\sim \frac{Bx}{\sqrt{\log x}}, \quad B=0.764...
\end{equation}
If $\scrL_0$ is an irrational lattice, then the multiplicity of the Laplacian eigenvalues is on average $4$ (for a generic vector $\xi\in\scrL_0[\i]$ the other three choices are $-\xi$, $\bar{\xi}$ and $-\bar{\xi}$). Indeed we have in this case
\begin{equation}
N_{\scrL_0}(x)\sim \frac{\pi}{4}x.
\end{equation}

We define the mean spacing up to a threshold $x$ by
\begin{equation}
\ave{\delta_j}_x=\frac{1}{N_{\scrL_0}(x)}\sum_{n_j\leq x}\delta_j
\end{equation}
and observe that
\begin{equation}
\ave{\delta_j}_x \sim \frac{x}{N_{\scrL_0}(x)} \sim
\begin{cases}
\frac{4}{\pi}, \quad\text{if $\scrL_0$ is irrational}\\
\\
B^{-1}\sqrt{\log x}, \quad\text{if $\scrL_0=\Z^2$}
\end{cases}
\end{equation}
We introduce the mean normalised spacings
\begin{equation}
\hat{\delta}_j=\frac{\delta_j}{\ave{\delta_j}}, \quad n_j\leq x.
\end{equation}
Analogously let $\delta_j^\varphi=\lambda_j^\varphi-\lambda_{j-1}^\varphi>0$ and we introduce the mean normalised spacings
\begin{equation}
\hat{\delta}^\varphi_j=\frac{\delta^\varphi_j}{\ave{\delta_j}}, \quad n_j\leq x
\end{equation}
and it is easy to see that $\ave{\delta_j^\varphi}_x \sim \ave{\delta_j}_x$ as $x\to\infty$ (recall that $\delta_j^\varphi$ are the spacings of the new eigenvalues which interlace with the norms $n_j$).

We have the following conjecture which is due to Berry and Tabor \cite{BerryTabor} and is expected to hold for a generic classically integrable quantum system. We will only state it in the special case of flat 2D tori.
\begin{conj} 
Let $\scrL_0$ be a Diophantine irrational lattice. Then the mean normalised spacings $\{\hat{\delta_j}\}_{n_j\leq x}$ have a Poissonian distribution of mean 1 in the limit as $x\to\infty$. So for any $h\in C^\infty(\R_+)$
\begin{equation}
\lim_{x\to\infty}\frac{1}{N_{\scrL_0}(x)}\sum_{n_j\leq x}h(\hat{\delta}_j)=\int_0^\infty h(s)e^{-s}ds
\end{equation}
\end{conj}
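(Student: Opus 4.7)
The plan is to approach the conjecture through the method of $n$-level correlations: by a standard argument, if all $n$-point correlation functions of $\{n_j\}$, normalised so that the mean spacing equals $1$, converge to the corresponding Poisson correlations, then the spacing distribution converges weakly to $e^{-s}\,ds$ and the conjecture follows for every $h\in C^\infty(\R_+)$ of suitable growth. The first step is therefore to reduce the statement about $h(\hat\delta_j)$ to a family of counting statements of the form
\begin{equation*}
\frac{1}{N_{\scrL_0}(x)} \#\bigl\{ (j_1,\dots,j_n) : n_{j_i} \leq x,\ (n_{j_i} - n_{j_1})/\ave{\delta_j}_x \in B \bigr\} \longrightarrow \Vol(B)
\end{equation*}
for every box $B\subset\R^{n-1}$ and every $n\geq 2$.

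Next, I would parametrise $\scrL_0 = \Z(a,0) \oplus \Z(0,1/a)$ so that the norms have the form $m^2/a^2 + n^2 a^2$, and the $n$-level count becomes a question about small values of differences $(m_1^2-m_2^2)/a^2 + (n_1^2-n_2^2)a^2$. Writing $\alpha = a^4$, these differences are proportional to $(m_1^2-m_2^2)\alpha^{-1/2} + (n_1^2-n_2^2)\alpha^{1/2}$, so the problem becomes a fine-scale value distribution question for the binary quadratic form $Q_\alpha(u,v) = u\alpha^{-1/2} + v\alpha^{1/2}$ evaluated on differences of squares. The Diophantine hypothesis on $a$ (equivalently on $\alpha$) should prevent the clumping that would arise from very good rational approximations to $\alpha$.

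For the pair correlation ($n=2$) this is essentially the value distribution conjecture for binary quadratic forms, studied by Sarnak, Eskin--Margulis--Mozes and Marklof via homogeneous dynamics on $\SL(2,\R)/\SL(2,\Z)$ and Ratner's measure classification. Their methods establish Poissonian pair correlation for Lebesgue almost every $\alpha$, and under explicit Diophantine hypotheses in certain restricted settings. Higher correlations have been obtained by Marklof in some special cases via equidistribution of unipotent orbits on adelic homogeneous spaces.

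The main obstacle is twofold. First, even the pair correlation for every Diophantine $\alpha$, as opposed to almost every $\alpha$, is still open; closing this gap requires effective versions of Ratner's theorem that are presently out of reach. Second, passing from $n$-level correlations to the spacing distribution requires controlling all $n$ uniformly, whereas existing dynamical arguments produce the correlations one at a time with constants that deteriorate in $n$. A complete unconditional proof therefore appears to demand genuinely new input, either quantitative equidistribution results in homogeneous dynamics strong enough to handle the Diophantine (rather than generic) regime, or a direct combinatorial analysis of simultaneous lattice configurations that bypasses the correlation-by-correlation approach altogether.
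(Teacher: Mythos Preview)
The statement you are addressing is labelled as a \emph{Conjecture} in the paper, and the paper offers no proof whatsoever: immediately after stating it the authors write that ``this conjecture remains out of reach, even a proof of the existence of the spacing distribution is not available,'' and they only mention partial progress on the pair correlation due to Eskin--Margulis--Mozes and Marklof. So there is no ``paper's own proof'' to compare against.

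Your proposal is not a proof either, and to your credit you say so explicitly in the final paragraph. What you have written is a reasonable and well-informed sketch of the standard programme (reduce to $n$-level correlations, interpret these via homogeneous dynamics on $\SL(2,\R)/\SL(2,\Z)$, invoke Ratner-type equidistribution), together with an honest assessment of why it currently fails: pair correlation is only known for almost every aspect ratio rather than every Diophantine one, and uniform control over all $n$ is lacking. This matches the state of the art as the paper describes it. The only adjustment I would make is one of framing: since the paper presents this as an open conjecture, your write-up should not be labelled a ``proof proposal'' but rather a discussion of possible approaches and their known obstructions.
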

\begin{remark}
We even expect this conjecture to hold in the case of the square torus (Important: we are ignoring multiplicities).
\end{remark}
This conjecture remains out of reach, even a proof of the existence of the spacing distribution is not available. However, some progress has been made with regard to the pair correlation function \cite{EMM,Marklof}, in cases where the aspect ratio satisfies certain diophantine properties. 

It is an interesting problem to study the effect of the scatterer on the distribution of the spacings $\hat{\delta}_j^\varphi$. We introduce the difference between Laplacian eigenvalues and the neighbouring eigenvalue of the point scatterer: $$d_j=n_j-\lambda_j^\varphi>0$$ and we denote its mean by 
\begin{equation}
\ave{d_j}_x=\frac{1}{N_{\scrL_0}(x)}\sum_{n_j\leq x}d_j.
\end{equation}

One has the following bound on the mean of $d_j$ which shows that on average the eigenvalues $n_j$ and $\lambda_j^\varphi$ ``clump" together. The result is derived in \cite{RU2} and the main tool is an exact trace formula for a point scatterer, similar to the trace formula proved in \cite{U}.
\begin{thm} {\bf (Rudnick-U., 2012)}\label{clumpingthm}\\
Let $\scrL_0$ be a general lattice. Fix $\varphi\in(-\pi,\pi)$. We have the bound
\begin{equation}\label{clumping}
\ave{d_j}_x \ll \frac{\ave{\delta_j}_x}{\log x}
\end{equation}
\end{thm}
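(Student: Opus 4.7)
The plan is to convert the average bound on $d_j$ into the cumulative bound $S(x):=\sum_{n_j \leq x} d_j \ll x/\log x$: since $\ave{\delta_j}_x \sim x/N_{\scrL_0}(x)$, this cumulative bound is equivalent to \eqref{clumping}. The target heuristic is that for most $n_j\le x$ one has $d_j\asymp r_\scrL(n_j)/(\pi\log n_j)$, so that the circle law $\sum_{n_j\le x} r_\scrL(n_j)=\pi x+O(x^{\theta+\epsilon})$ produces a total of the predicted order $x/\log x$.

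To extract this heuristic from the quantisation condition \eqref{quant cond} I would isolate the term that becomes singular as $\lambda\to n_j$. Denoting by $F(\lambda)$ the left-hand side of \eqref{quant cond} and setting $R_j(\lambda):=F(\lambda)-r_\scrL(n_j)/(n_j-\lambda)$, the equation $F(\lambda_j^\varphi)=c_0\tan(\varphi/2)$ rewrites as
\begin{equation*}
\frac{r_\scrL(n_j)}{d_j} \;=\; c_0\tan(\varphi/2) - R_j(\lambda_j^\varphi).
\end{equation*}
Splitting $R_j(\lambda_j^\varphi)$ into a far contribution ($|n_k-\lambda_j^\varphi|\geq(\lambda_j^\varphi)^\delta$) and a near contribution with $k\neq j$, Lemma \ref{truncation} evaluates the far part as $-\pi\log n_j+O(1)$, leaving the pointwise identity
\begin{equation*}
\frac{r_\scrL(n_j)}{d_j} \;=\; \pi\log n_j - E_j + O(1),
\end{equation*}
where $E_j$ is the near-diagonal remainder collecting the lattice norms $n_k\neq n_j$ within distance $(\lambda_j^\varphi)^\delta$ of $\lambda_j^\varphi$. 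Whenever $E_j\leq c\log n_j$ with a fixed $c<\pi$ this yields $d_j\ll r_\scrL(n_j)/\log n_j$, which is the desired pointwise bound.

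For the averaging step I would invoke an exact trace formula for $-\Delta_\varphi$ of the type developed in \cite{U}. Such a formula produces a closed-form identity for $S(x)$, or equivalently for $\int_0^x\xi(E)\,dE$, where $\xi=N_\varphi-N_{\scrL_0}$ is the Krein spectral shift function of the pair $(-\Delta_\varphi,-\Delta)$, and writes this integral as a Weyl-type main term of the predicted size $x/\log x$ plus oscillatory contributions indexed by vectors of $\scrL_0$ that carry enough cancellation to be absorbed. The main obstacle is precisely the control of the near-diagonal fluctuations $E_j$: when several lattice norms cluster just above $n_j$ they can drag $r_\scrL(n_j)/d_j$ below $\log n_j$ and invalidate the pointwise heuristic. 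Without the trace formula one would be forced to bound the exceptional set $\{j\leq N_{\scrL_0}(x):E_j\geq c\log n_j\}$ by a second-moment argument built on Huxley's lattice-point exponent $\theta=131/416$, and to handle those indices via the trivial bound $d_j\leq\delta_j$; the trace-formula route sidesteps this bookkeeping and delivers $S(x)\ll x/\log x$ directly.
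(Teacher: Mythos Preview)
The paper is a survey and does not actually prove this theorem; it only states that the result is derived in \cite{RU2} and that \emph{the main tool is an exact trace formula for a point scatterer}, of the type proved in \cite{U}. Your proposal ultimately lands on exactly this tool, so at the level of strategy you are aligned with the paper's stated method.

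That said, the first two thirds of your proposal (isolating the singular term in \eqref{quant cond}, applying Lemma~\ref{truncation} to the far part, and trying to control the near-diagonal piece $E_j$ pointwise) is not the route the paper indicates, and you yourself identify why it stalls: the clustering of lattice norms near $n_j$ can make $E_j$ comparable to $\log n_j$ on a set you cannot easily bound without further input. Your sketch of a second-moment workaround using Huxley's exponent is plausible but unexecuted, and it is not clear it would close the gap. The honest content of your proposal is therefore the final paragraph, where you invoke the trace formula / spectral shift function identity to evaluate $\int_0^x \xi(E)\,dE$ directly; this is precisely the mechanism the paper points to, and the pointwise heuristic should be regarded as motivation rather than as a step in the argument.
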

\begin{remark}
This implies that if either the limiting distribution of $\hat{\delta}_j$ or $\hat{\delta}_j^\varphi$ exist, then they must coincide. By use of the trace formula proved in \cite{U} it is possible to extend this result to surfaces of constant negative curvature.
\end{remark}
The situation is very different in 3D. Denote $\scrL_0^3=\Z(a,0,0)\oplus\Z(0,b,0)\oplus\Z(0,0,1/ab)$, $a,b>0$ and consider the torus $\T^3=\R^3/2\pi\scrL_0^3$. An orthonormal basis of Laplacian eigenfunctions is given by the exponentials $$\psi_\xi(x)=\frac{1}{\sqrt{8}\pi^{3/2}}e^{\i\left\langle \xi,x \right\rangle}, \quad \xi\in\scrL^3$$ where $\scrL^3$ is the dual of $\scrL_0^3$, and the eigenvalue is given by the norm $|\xi|^2$. We denote the set of distinct norms by $\scrN_3$. 

Let $x_0\in\T^3$. In 3D the operator $-\Delta_0=-\Delta|_{D_0}$, where $D_0=C^\infty_0(\T^3\setminus\{x_0\})$, has deficiency indices $(1,1)$ \footnote{In dimension $d\geq 4$ the analogous operator $-\Delta_0$ is essentially self-adjoint.} and we denote the self-adjoint extensions by $-\Delta_\varphi$. The new eigenvalues of $-\Delta_\varphi$ (which interlace with the norms $\scrN_3$) are denoted by $\eta_j^\varphi$ and we define the differences $\delta_j=n_j-n_{j-1}$ and $d_j=n_j-\eta_j^\varphi$. As opposed to Theorem \ref{clumpingthm} the eigenvalues of the point scatterer on a 3D torus seem to lie on average in the middle of the neigbouring Laplacian eigenvalues. We have the following result which is consistent with intermediate statistics. The proof is given in \cite{RU2} and the methods are the same as in the case of Theorem \ref{clumpingthm}.
\begin{thm} {\bf (Rudnick-U., 2012)}\\\label{intermediate}
Fix $\varphi\in(-\pi,\pi)$. For a general lattice $\scrL_0^3$ we have the asymptotic
\begin{equation}
\ave{d_j}_x\sim \frac{1}{2}\ave{\delta_j}_x.
\end{equation}
\end{thm}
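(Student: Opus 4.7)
My approach parallels the proof of Theorem \ref{clumpingthm}, using an exact trace formula for the 3D point scatterer in the spirit of \cite{U}. The structural difference from the 2D case is that in three dimensions the two adjacent singular terms of the spectral equation dominate on each interval $(n_{j-1},n_j)$ and balance each other, forcing $\eta_j^\varphi$ near the midpoint; in 2D the dominant bulk contribution (of size $\log\lambda$) instead forced $\eta_j^\varphi$ towards one endpoint, giving clumping.

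\textbf{Local analysis.} The new eigenvalues are the unique roots in each $(n_{j-1},n_j)$ of a 3D analog of \eqref{quant cond}, $F_3(\lambda)=c(\varphi)$, where $F_3$ has simple poles at each $n_j$ with residue $-r_{\scrL^3}(n_j)$. Separate the two adjacent singular contributions,
$$F_3(\lambda) = -\frac{r_{j-1}}{\lambda-n_{j-1}} + \frac{r_j}{n_j-\lambda} + \tilde F_j(\lambda), \qquad r_k:=r_{\scrL^3}(n_k).$$
Summation by parts against the 3D Weyl law $\sum_{n_k\leq t}r_{\scrL^3}(n_k)\sim C\,t^{3/2}$, combined with the vanishing $\mathrm{PV}\int_0^\infty\frac{dt}{\sqrt{t}(t-\lambda)}=0$, yields the pointwise bound $\tilde F_j(\lambda)=O(\sqrt{\lambda})$; this is of the same order as each singular term at the midpoint, since $\delta_j\sim\lambda^{-1/2}$ for a general 3D lattice. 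For a general lattice the sign-flip symmetries furthermore give $r_{\scrL^3}(n_j)=8$ for every $j$ outside a density-zero subset (those indices where some coordinate of the representing lattice vector vanishes). Setting $r_{j-1}=r_j=r=8$ and linearising $F_3(\eta_j^\varphi)=c(\varphi)$ around the midpoint $m_j:=(n_{j-1}+n_j)/2$ produces
$$\eta_j^\varphi - m_j = \frac{\delta_j^2}{8r}\bigl(c(\varphi)-\tilde F_j(m_j)\bigr) + O\bigl(\delta_j^3\tilde F_j(m_j)^2/r^2\bigr),$$
so pointwise $d_j = \delta_j/2 + O(\delta_j)$.

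\textbf{Global cancellation.} The pointwise bound is not yet of the form $\delta_j/2+o(\delta_j)$, and naive term-by-term estimation only gives $\sum_{n_j\leq x}|d_j-\delta_j/2|=O(x)$, which is of the same size as $\sum\delta_j$. The heuristic for cancellation is that $\tilde F_j(m_j)$ has mean zero in a weighted sense: replacing the $j$-sum by the corresponding integral produces $\sum_j\delta_j^2/(n_k-m_j)\approx\int_0^\infty dt/(\sqrt{t}(n_k-t))=0$ for each $k$, by the same PV identity. To make this rigorous I would invoke the 3D version of the trace formula of \cite{U}: after interchanging orders of summation the quantity $\sum_{n_j\leq x}(d_j-\delta_j/2)$ becomes an oscillatory sum over $\zeta\in\scrL^3\setminus\{0\}$ with amplitudes decaying like $|\zeta|^{-3/2}$, which Kusmin--Landau or Van der Corput estimates should bound by $O(x^{1-\kappa})$ for some $\kappa>0$. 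Dividing by $N_{\scrL_0^3}(x)$ then yields $\ave{d_j}_x\sim\tfrac{1}{2}\ave{\delta_j}_x$. The main obstacle is exactly this final averaging step: the PV cancellation only just barely produces a small enough pointwise remainder, and extracting genuine cancellation (rather than an absolute bound) from the associated 3D lattice exponential sum is more delicate than its 2D counterpart used in Theorem \ref{clumpingthm}, precisely because the lattice point density grows faster with $\lambda$ in three dimensions.
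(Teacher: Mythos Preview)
Your heuristic picture---that in 3D the two adjacent poles balance because there is no bulk $\log\lambda$ term---is the right intuition, but the argument has genuine gaps and is not the route taken in \cite{RU2}. There $\sum_{n_j\leq x}d_j$ is computed \emph{directly} from an exact trace formula for the perturbed operator (of the type in \cite{U}): one evaluates $\sum_j\bigl(f(n_j)-f(\eta_j^\varphi)\bigr)$ for suitable test functions and reads off $\sum d_j\sim x/2$ from the leading term, without ever localising individual $\eta_j^\varphi$ inside their intervals.

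Your local analysis, by contrast, yields nothing beyond the trivial interlacing bound. You observe yourself that $\tilde F_j(m_j)=O(\sqrt\lambda)$ is of the \emph{same} order as each singular term $r/(\delta_j/2)\asymp\sqrt\lambda$ at the midpoint; consequently the linearisation around $m_j$ is not justified---$\eta_j^\varphi$ need not lie in the region where the first-order Taylor expansion of $F_3$ is accurate---and your displayed formula for $\eta_j^\varphi-m_j$ can be off by a factor of order one. All that survives rigorously is $d_j=\delta_j/2+O(\delta_j)$, which is vacuous. The assumption $r_{j-1}=r_j=8$ is also false at the stated generality: the theorem covers all lattices, including $\Z^3$, where multiplicities are unbounded. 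The entire content must therefore come from your ``global cancellation'' paragraph, and there you have only a heuristic; worse, the cancellation you seek would have to absorb not just the linearised main term but the uncontrolled higher-order corrections from the failed linearisation as well.
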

It would be desirable to obtain information about the variance of $d_j$. Theorems \ref{clumpingthm} and \ref{intermediate} are proved using a trace formula for the point scatterer of the type proved in \cite{U}. However, this approach fails for higher moments.\\

{\bf Open problems:} Consider a 3D torus.
Can one obtain an asymptotic for $\ave{d_j^2}_x$ or a useful bound such as $\ave{d_j^2}_x<c_0\ave{\delta_j}_x^2$ for some $c_0<\tfrac{1}{2}$? What about higher moments?


\begin{thebibliography}{99}

\bibitem{BerryTabor}
M.~Berry, M.~Tabor, {\em Level clustering in the regular spectrum}, Proc. Royal Soc. Lond. A 356, 375--394, 1977.

\bibitem{BerkolaikoKeatingWinn}
G.~Berkolaiko, J.~P.~Keating and B.~Winn, {\em Intermediate wave function statistics}, Phys. Rev. Lett. 91, 134--103, 2003.


\bibitem{BohigasGiannoniSchmit}
O.~Bohigas, M.~J.~Giannoni and C.~Schmit, {\em Characterization of Chaotic Quantum Spectra and Universality of Level Fluctuation Laws}, Phys. Rev. Lett. 52, 1--4, 1984.


\bibitem{BogomolnyGerlandSchmit}
E.~Bogomolny, U.~Gerland and C.~Schmit, {\em Singular Statistics}, Phys. Rev. E, Vol. 63, No. 3, 2001.


\bibitem{EMM}
A.~Eskin, G.~Margulis and S.~Mozes, {\em Quadratic forms of signature (2,2) and eigenvalue spacings on rectangular  tori}. Ann. of Math. (2) 161, no. 2, 679--725, 2005.

\bibitem{FaddeevBerezin}
F.~A.~Berezin and L.~D.~Faddeev, {\em Remark on the Schr\"odinger equation with a singular potential}, Dokl. Akad. Nauk SSSR 137, 1011--1014, 1961 (Russian); English translation: Soviet Mathematics 2, 372--375, 1961.


\bibitem{Guhr}
T.~Guhr, A.~M\"uller-Groeling, H.~A.~Weidenm\"uller, {\em Random matrix theories in quantum physics: common concepts.} Physics Reports 299, 189--425, 1998. 

\bibitem{GiraudMarklofOKeefe}
O.~Giraud, J.~Marklof, S.~O'Keefe, {\em Intermediate statistics in quantum maps.} J. Phys. A: Math. Gen. 37, No. 28, 303--311, 2004.

\bibitem{Huxley}
M.~N.~Huxley, {\em Exponential sums and lattice points. III.} Proc. London Math. Soc. (3) 87, No. 3, 591--609, 2003.


\bibitem{Jakobson}
D.~Jakobson, {\em Quantum limits on flat tori.} Ann. of Math. (2) 145, 235--266, 1997.


\bibitem{KeatingMarklofWinn}
J.~P.~Keating, J.~Marklof and B.~Winn, {\em Localized eigenfunctions in Seba billiards.} J. Math. Phys. 51, No. 6, 062101, 19 pp.

\bibitem{KronigPenney}
R.~de L.~Kronig and W.~G.~Penney, {\em Quantum Mechanics of Electrons in Crystal Lattices.} Proceedings of the Royal Society of London. Series A, Vol. 130, No. 814, 499--513, 1931.

\bibitem{KU}
P.~Kurlberg, H.~Uebersch\"ar, {\em work in progress.}

\bibitem{Landau}
E.~Landau, {\em \"Uber die Einteilung der positiven ganzen Zahlen in vier Klassen nach der Mindestzahl der zu ihrer additiven Zusammensetzung erforderlichen Quadrate.} Arch. Math. Phys. 13, 305--312, 1908.

\bibitem{Marklof}
J.~Marklof, {\em Pair correlation densities of inhomogeneous quadratic forms}. Ann. of Math. (2) 158, No. 2, 419--471, 2003.

\bibitem{MarklofR}
J.~Marklof, Z.~Rudnick, {\em Almost all eigenfunctions of a rational polygon are uniformly distributed}. Journal of Spectral Theory 2, 107--113, 2012.


\bibitem{OraveczRudnickWigman}
F.~Oravecz, Z.~Rudnick and I.~Wigman, {\em The Leray measure of nodal sets for random eigenfunctions on the torus.} Annales de l'Institut Fourier 58, No. 1, 299--335, 2008.

\bibitem{RU}
Z.~Rudnick, H.~Uebersch\"ar, {\em Statistics of wave functions for a point scatterer on the torus}. Comm. Math. Phys., Vol. 316, No. 3, 763--782, 2012.

\bibitem{RU2}
Z.~Rudnick, H.~Uebersch\"ar, {\em On the eigenvalue spacing distribution for a point scatterer on the torus}. preprint, arXiv:1208.5920v1

\bibitem{Seba}
P.~ \u{S}eba, {\em Wave chaos in singular quantum billiard}. Phys. Rev. Lett. 64, 1855--1858, 1990.

\bibitem{SebaExner}
P.~\u{S}eba, P.~Exner, {\em Point interactions in two and three dimensions as models of small scatterers}. Physics Letters A 222, 1-4, 1996.
 
\bibitem{Shigehara1}
T.~Shigehara, {\em Conditions for the appearance of wave chaos in quantum singular systems with a pointlike scatterer},
Phys. Rev. E, Vol. 50, No. 6, 1994.

\bibitem{Shigehara2}
T.~Shigehara, T. Cheon, {\em Wave chaos in quantum billiards with a
small but finite-size scatterer}, Phys. Rev. E, Vol. 54, No. 2,
1321--1331, 1996.

\bibitem{ShigeharaCheon3D}
T.~Shigehara, T. Cheon, {\em Spectral properties of
three-dimensional quantum billiards with a pointlike scatterer},
Phys. Rev. E 55, 6832--684, 1997.

\bibitem{Shigehara3}
T.~Shigehara, H. Mizoguchi, T. Mishima, T. Cheon, {\em Chaos Induced
by Quantization}, 1998.


\bibitem{Stoeckmann}
T.~Tudorovskiy, U.~Kuhl, H.-J.~St\"ockmann, {\em Singular statistics revised}, New J. Phys. 12, 123021, 2010.

\bibitem{U}
H.~Uebersch\"ar, {\em The trace formula for a point scatterer on a compact hyperbolic surface}, J. Math. Phys. 53, 012108, 2012.

\bibitem{Yesha}
N.~Yesha, {\em Eigenfunction statistics for a point scatterer on a three-dimensional torus}, preprint. arXiv:1207.4696v2

\bibitem{Z}
J.~Zorbas, {\em Perturbation of self-adjoint operators by Dirac distributions}. J. Math. Phys. 21, No. 4, 840--847, 1980.

\end{thebibliography}
\end{document}